\newtheorem{lemma}{lemma}
\makeatletter \@addtoreset{lemma}{section} \makeatother
\makeatletter \@addtoreset{table}{section} \makeatother
\title{A Novel First-Principles Model of Injection-Locked  Oscillator 
Phase Noise.}
\author{Torsten Djurhuus \thanks{The authors are
with the Institute of Physics, Goethe University of Frankfurt am
Main, Max-von- Laue-Strasse 1, 60438, Frankfurt am Main.
(correspondence e-mail:
t.djurhuus@physik.uni-frankfurt.de).} \\
  Goethe-University Frankfurt\\
  \texttt{t.djurhuus@physik.uni-frankfurt.de} \\
  \And
  Viktor Krozer \\
  Goethe-University Frankfurt\\
  \texttt{krozer@physik.uni-frankfurt.de}\\
}
\begin{document}

\maketitle

\begin{abstract}
  The paper documents the development of a novel time-domain model 
  of injection-locked oscillator phase-noise response. The 
  methodology follows a first-principle approach 
  and applies to all circuit topologies, coupling configurations, 
  parameter dependencies \emph{etc.} 
  The corresponding numerical algorithm is readily integrated into 
  all major commercial simulation software suites. 
  The model advances current state-of-the-art pertaining to 
  analytical modelling of this class of circuits.  Using this 
  novel analytical framework, several 
  important new insights are revealed which, in-turn, translate into 
  useful design rules for synthesis of injection-locked oscillator 
  circuits with optimal noise performance.    
\end{abstract}

\keywords{phase noise, injection-locked oscillators,
circuit analysis, nonlinear dynamical systems, system analysis and
design}

\section{Introduction}
\label{sec0}

Efficient and rigorous modelling tools for predicting 
phase-noise (PNOISE) response of oscillator/clock-circuits 
play a critical role in the design cycle of 
modern communication and remote-sensing systems. 
Given the complexity of contemporary device models, 
including the presence of various correlated, colored and modulated noise 
sources, and considering the scale of modern circuit schematics, 
such design work necessitates the use of an 
Electronic-Design-Automation (EDA) simulation 
environment (\emph{e.g.} 
Keysight-ADS\textsuperscript{\textcopyright} or
Cadence SpectreRF\textsuperscript{\textcopyright}). 
Relevant numerical algorithms must hence be compatible with this 
EDA interface. Simply stated, this requires a methodology 
which is formulated directly from a set of unspecified, 
nonlinear stochastic-differential-equations (SDE). 
The resulting model hence applies to any 
circuit regardless of topology, system dimension, parameter 
dependencies \emph{etc.} Herein, this modelling strategy
is referred to as a \emph{first-principle} (FP) approach, although sometimes 
we may use terms such as \emph{e.g.} unified, generalized or macro-model to 
describe the same concept. 
\par
The paper presents a novel time-domain (TD) FP methodology aimed at 
describing the PNOISE response of injection-locked oscillator (ILO) circuits; the so-called 
ILO phase macro-model (ILO-PMM). The  model is derived as a specialization 
of an earlier model published by the authors \cite{djurhuus22}.
The methodology is highly rigorous, being based on 
nonlinear stochastic integration techniques and Floquet 
decomposition methods. It represents a direct extension 
of the single oscillator phase-macro-model (PMM) developed in \cite{kartner1990,demir2000,
traversa2011}. The ILO-PMM, however, not only extends but also replaces the standard 
single oscillator PMM; reverting to this representation for zero coupling. The ideas 
presented herein advance the current state-of-the-art (SOA) \emph{w.r.t} modelling, 
analysis, synthesis and optimization of ILO noise response. The derivation of 
the ILO-PMM framework will be discussed in \cref{sec1b} following a brief introduction 
to the underlying theory in \cref{sec1a}.  
\par
Existing FP solutions to the problem discussed here are generally 
formulated as frequency-domain (FD) models involving some variation of the standard 
conversion-matrix (C-MATRIX) methodology \cite{samori1998}. 
These types of methods, such as \emph{e.g.} the 
\emph{pnmx} algorithm which is part of the Keysight-ADS\textsuperscript{\textcopyright} 
software suite, are entirely numerical in nature with no accompanying
representation for analysis purposes; herein referred to as \emph{black-box models}. In contrast, ILO-PMM 
methodology produces a unique closed-form algebraic expression representing the 
ILO PNOISE response. This analytical interface represents an unprecedented and unparalleled 
feature of the ILO-PMM as no other FP description (EDA compatible model), 
currently published, exist with this property. 
\par
The ILO-PMM, being an analytical FP framework both 
extends and replaces all previous phenomenological  
analytical methodologies published on this topic. Such schemes are 
all predicated on the assumption of special ILO solution (\emph{e.g} 
the quasi-sinusoidal Kurokawa methodology \cite{kurokawa1968}) or a special 
circuit topology  (\emph{e.g.} a block-diagram formulation). 
These reduced-order, empirical 
modelling frameworks cannot readily encompass the circuit complexities mentioned above and 
are obviously not compatible with the EDA interface. Nevertheless, they 
do, however, represent the current state-of-the-art (SOA) when it comes to 
mathematical/theoretical analysis of the ILO PNOISE scenario; simply because, 
up until this point, no fully rigorous alternatives existed. By preempting these 
phenomenological methodologies the ILO-PMM significantly advances 
the current SOA in the realm of circuit theoretical analysis. The text in  \cref{sec2,sec2a,sec2b} discuss how 
the ILO-PMM can be reduced/downsized to create a equivalent FP representation 
of the well-established Kurokawa quasi-sinusoidal (Q-SINUS) model representation 
\cite{kurokawa1968,ramirez2008}; referred to as the K-ILO model. 
Comparing the ILO-PMM and the reduced-order K-ILO model descriptions, several 
new insights, pertaining to the \emph{bounds} or \emph{range-of-application} 
of the Kurokawa methodology, are uncovered and discussed 
(see \cref{sec2b:tab1,sec2b:lem1}). 
\par
In \cref{sec3} below the capability of the proposed ILO-PMM framework 
is demonstrated by comparing its performance against well-established 
commercial numerical simulation routines. The simulations are carried out  
on a $0.9\mathrm{GHz}$ circuit created by coupling a simple LC, negative-resistance 
oscillator to a CMOS, cross-coupled LC-tank unit through a unilateral buffer 
amplifier.  The analytical ideas developed in 
\cref{sec2b} are tested by comparing the outputs of the 
ILO-PMM and reduced-order K-ILO models, applied to the circuit described above.  
The work in \cref{sec3} verifies the ILO-PMM model, and the underlying 
theoretic ideas used to develop it. The woek in both \cref{sec2,sec3} 
solidifies the ILO-PMM status as the new benchmark \emph{w.r.t.} 
theoretical analysis of the ILO noise scenario. The novel analytical 
tools introduced in this paper will prompt new insights and ideas 
leading to the development of design rules 
aimed at synthesizing ILO circuits with optimal noise performance.

\section{Theory}

\label{sec1}

The topic of discussion in this paper is the ILO-PMM, 
which refers to a TD, first-principles (FP) model of ILO PNOISE response. 
This novel result relies on theory developed by the authors 
in an earlier publication \cite{djurhuus22}. That paper treated the 
scenario of a general coupled oscillator ensemble. 
Below, a brief introduction to the rather complicated 
theoretic program, discussed in \cite{djurhuus22}, is given. 
We then proceed show how to utilize these ideas 
to formulate the novel ILO-PMM model.

\subsection{ The Coupled Oscillator PMM : a Brief Review.}

\label{sec1a}

\begin{figure}[!h]
\begin{center}
\includegraphics[scale=0.8]{./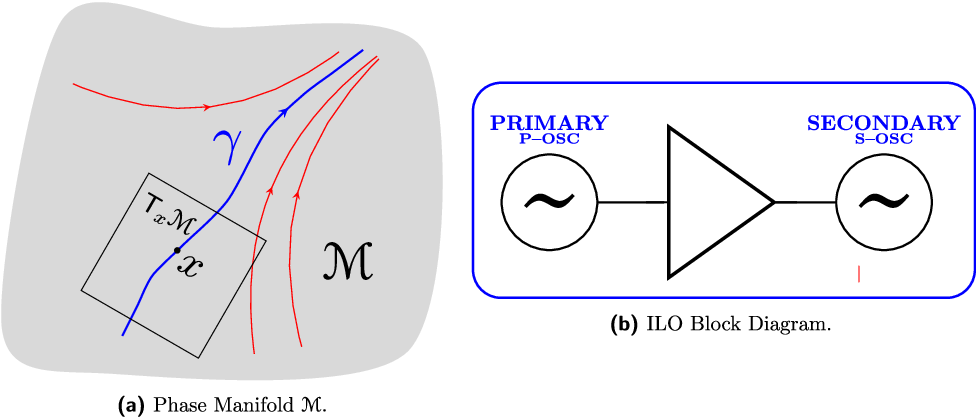}
\end{center}
\caption{ (a) : The limit-cycle $\gamma$ (blue orbit) is embedded in
the $k$-dimensional, closed, \emph{phase-manifold}, $\mathcal{M}$. 
All orbits on, and off, $\mathcal{M}$ approach this
$1$-dimensional set asymptotically with time (red orbits). The
tangent-space to $\mathcal{M}$ at a point $x\in \mathcal{M}$,
$\mathsf{T}_x\mathcal{M}$, is an affine copy of $\mathbb{R}^k$. 
(b) : Illustrating the ILO circuit configuration. The 
free-running primary oscillator (P-OSC) is coupled unilaterally, through 
some buffer amplifier to the secondary oscillator (S-OSC). Assuming a synchronized PSS is reached, 
the S-OSC is locked to P-OSC injected signal.
} \label{sec1a:fig1}
\end{figure}

A novel FP (unified) model, for the prediction of the first-order stochastic response of 
noise-perturbed coupled oscillator ensemble, was recently 
published in \cite{djurhuus22} by the authors. One of the results which was achieved  
involved a unique closed-form expression for the PNOISE spectrum
of a general ensemble. The model developed 
in \cite{djurhuus22} will be referred to as the coupled oscillator 
macro-model (COSC-PMM). The COSC-PMM framework is formulated from FP, 
and thus, by definition, incorporates all circuit topologies,
coupling configurations and parameter dependencies in one closed-form 
expression. The model extends and replaces the well-established single oscillator 
PMM model \cite{demir2000,kartner1990,traversa2011}. The novel COSC-PMM theoretic 
framework was achieved by employing ideas and results 
from various branches of mathematics 
such as \emph{e.g.}  manifold-theory, differential-geometry and 
Floquet theory. Below, a brief review of these rather involved 
topics is provided. For a more detailed discussion of these issues 
the reader is referred to the text in \cite{djurhuus22}.
\par
Consider $k$ free-running and asymptotically stable autonomous oscillator units
coupled through some type of network; referred to herein 
as a $k$-ensemble. We assume that this network reaches a synchronized state. 
Let $x\in \mathbb{R}^n$ be the $n$-dimensional
state-space of the coupled network ($k$-ensemble). The synchronized
PSS is then written, $x_s(t+T_0) = x_s(t)$, with $T_0>0$ being the
period of the synchronized k-ensemble. Below the $\nu$th harmonic 
of the PSS spectrum is written $X_{s,\nu}$. Here $X_{s,\nu}$ 
is an $n$-dimensional vector \emph{i.e.} $X_{s,\nu} \in \mathbb{R}^n$. 
Furthermore, the $q$th, $q\in [1;n]$, element of such a vector is 
written $X^{[q]}_{s,\nu} \in \mathbb{R}$.  
\par
In \cite{djurhuus22} the concept of a \emph{phase-manifold}, 
$\mathcal{M}$, is introduced. This is a closed $k$-dimensional 
space which is known to exist, for weak coupling, 
due to the persistence of normally hyperbolic manifolds. 
Furthermore, it can be shown that the ensemble
limit-cycle, $\gamma$, is embedded in this manifold; \emph{i.e.}
it is a sub-manifold. Let $\mathbb{T}_{\gamma}\mathbb{M}$ refer 
to the tangent-bundle, on $\mathbb{M}$, and along $\gamma$ where the 
concept of a tangent-bundle simply refers to the disjoint union 
of tangent-spaces (affine copies of vector-space $\mathbb{R}^k$) along $\gamma$; 
\emph{i.e.}
$\mathbb{T}_{\gamma}\mathbb{M} = \{\mathbb{T}_{x}\mathbb{M}\}_{x\in \gamma}$. As detailed 
in \cite{djurhuus22}, there exists exactly $k$ Floquet vectors $\{u_i(t)\}_{i=1}^k$ 
which span the bundle $\mathbb{T}_{\gamma}\mathbb{M}$. These 
$k$ modes, also known as the \emph{phase-modes}, 
govern the PNOISE response of the $k$-ensemble. The scenario discussed 
here is illustrated in \cref{sec1a:fig1}.(a).
\par
The setup described above and in \cref{sec1a:fig1}.(a), assures the 
existence of $k$ unique Floquet phase-modes $\{u_i(t)\}_{i=1}^k$. 
It is then possible to calculate the PNOISE response which 
is the response spanned by these operators. 
Following some rather lengthy and involved calculations, 
an expression for the PNOISE spectral density of a $k$-ensemble  
was derived in \cite[sec. 6]{djurhuus22} which we repeat here

\begin{equation}
\mathfrak{L}^{(\nu)}(\omega_m^{(\nu)}) = \frac{ (1 - a^{(\nu)} )(\omega_0^2c)
-2b^{(\nu)} \omega_m }{ (0.5\omega_0^2c)^2 + (\omega_m^{(\nu)})^2} +
\sum_{\rho} \sum_{l=2}^k \frac{ \Upsilon^{(\nu)}_{l\rho} \bigl[
2|\mu_{l,r}| + (\omega_0^2\rho^2c) \bigr] +
2\Delta^{(\nu)}_{l\rho}\bigl[ \omega_m^{(\nu)} + \mu_{l,i} \bigr] }{ \bigl(
|\mu_{l,r}| + \bigl(0.5\omega_0^2\rho^2c\bigr) \bigr)^2 + \bigl(
\omega_m^{(\nu)} + \mu_{l,i} \bigr)^2}   \label{sec1a:eq1}
\end{equation}

with $\omega_0 = 2\pi/T_0$ being the operating frequency of the synchronized
ensemble, superscript, index $\nu$ refers carrier harmonic around which the PNOISE 
response is calculated, 
$\mu_s = \mu_{s,r} + j \mu_{s,i} \in \mathbb{C}$, is the $s$th
characteristic Floquet exponent (real/imaginary parts) and
$\omega^{(\nu)}_m = \omega - \nu\omega_0$ is the $\nu$th harmonic
offset frequency. The real parameter, $c \in \mathbb{R}_{+}$, is known as the 
\emph{phase-diffusion} constant\cite{kartner1990,demir2000,traversa2011,djurhuus2009}. 
All sums in \cref{sec1a:eq1} w/o bounds operate in the interval $(-\infty,\infty)$. 
The operators $a^{(\nu)},b^{(\nu)},\Upsilon^{(\nu)}_{l\rho},\Delta^{(\nu)}_{l\rho}$ introduced 
in \cref{sec1a:eq1} are defined through 

\begin{align}
a^{(\nu)}+jb^{(\nu)} &= \bigl[\Omega^{(\nu)}\bigr]_{q,q}/\Vert X^{[q]}_{s,\nu}\Vert^2
\label{sec1:eq2} \\
\Upsilon^{(\nu)}_{l\rho} + j \Delta^{(\nu)}_{l\rho} &= 
\bigl[\Theta^{(\nu)}_{l\rho}\bigr]_{q,q}/
\Vert X^{[q]}_{s,\nu}\Vert^2 \label{sec1:eq3}
\end{align}  

where index $q$ refers to the index of the \emph{observation-node} \emph{i.e.} 
the node where PNOISE is measured,  $[S]_{x,y}$ denotes to the 
element at row $x$ and column $y$ of matrix $S$ and $X^{[q]}_{s,\nu}$ refers 
to the $q$ element of the PSS $\nu$th harmonic vector (see discussion above).  
The two complex tensor operators $\Omega$ and $\Theta_{l\rho} \in \mathbb{C}^{n\times n}$ 
have the form (see \cite[appendix B]{djurhuus22} for details)

\begin{align}
\Omega^{(\nu)} &=  \sum_{m=2}^k \sum_{p} \frac{
  U_{1,\nu}\Lambda_{1,0}^{\top}\Lambda_{m,\nu-p}^*U_{m,p}^{\dagger}}{j\omega_0(p-\nu)
  - \mu_m^*} \label{sec1a:eq4}\\
  \Theta_{l\rho}^{(\nu)} &= \frac{
  U_{1,\rho}\Lambda_{1,0}^{\top}\Lambda_{l,\rho-\nu}^*U_{l,\nu}^{\dagger}}{j\omega_0(\nu-\rho)
    - \mu_l^* - \mu_1} +  \frac{\sum_{i = 2}^k  \sum_p
    U_{i,p}\Lambda_{l,\rho-p}^{\top}\Lambda_{l,\rho-\nu}^*U_{l,\nu}^{\dagger}}{j\omega_0(\nu-p)
    - \mu_m^* - \mu_i}\label{sec1a:eq5}
\end{align}

with $U_{i,j} \in \mathbb{C}^n$ being the $j$th harmonic of the $i$th 
Floquet phase-mode vector $u_i(t) : \mathbb{R} \to \mathbb{C}^n$ 
and $\Lambda_{i,j} \in \mathbb{C}^n$ is the $j$th harmonic of the $i$th 
Floquet phase-mode lambda-vector $\lambda_i(t) : \mathbb{R} \to \mathbb{C}^n$ 
where $\lambda_i(t) = v_i^{\top}(t)B(x_s(t))$, with $v_i(t) : \mathbb{R} \to \mathbb{C}^n$ 
being the $i$th Floquet phase-mode dual-vector. Finally, $B(x_s(t)) : \mathbb{R} \to \mathbb{R}^{n\times p}$ is the noise modulation 
matrix, see \emph{e.g.} \cite{kartner1990,demir2000,traversa2011,djurhuus22}, 
which describes how $p$ white-noise sources are injected into the $n$-dimensional 
state-space and the manner in which the sources are modulated by the PSS, $x_s(t)$.

\subsection{An Injection Locked Oscillator PMM : the ILO-PMM.}
\label{sec1b}

The ILO configuration is illustrated in \cref{sec1a:fig1}.(b). 
As can be seen, the coupled ensemble involves 2 oscillator units, 
with the secondary oscillator (S-OSC) coupled to the 
primary oscillator (P-OSC) through 
some unilateral buffer/amplifier circuit. 
Since we are considering 2 coupled units it follows from 
the discussion in \cref{sec1a}, that there will exist two real 
Floquet phase-modes ($\mu_1=0,\mu_2<0$) governing the synchronized ILO PNOISE response. 
Considering the PNOISE spectrum around the 1st PSS harmonic spectrum ($\nu=1$), 
and given the setup with 2 phase-modes ($k=2$), the general expression 
in \cref{sec1a:eq1} specializes to the following model for an ILO circuit

\begin{equation}
\mathfrak{L}_{\text{\tiny ILO }}(\omega_m) = \frac{ (1- \alpha )(\omega_0^2c)
-2\beta \omega_m }{ (0.5\omega_0^2c)^2 + \omega_m^2} +
\sum_{\rho} \frac{ \Delta_{\rho} \bigl[
2|\mu_2| + (\omega_0^2\rho^2c) \bigr] +
2\Gamma_{\rho}\omega_m }{ \bigl(
|\mu_2| + \bigl(0.5\omega_0^2\rho^2c\bigr) \bigr)^2 +
\omega_m^2} \label{sec1b:eq1}
\end{equation}

where the notation, $\mu_2 = \mu_{2,r}$, was used since we know that 
the second phase mode is real and 

\begin{align}
\alpha + j \beta  &= \bigl[\Psi \bigr]_{q,q}/\Vert X^{[q]}_{s,1}\Vert^2  \label{sec1b:eq2}\\ 
\Delta_{\rho} + j \Gamma_{l\rho} &= \bigl[\Phi_{\rho}\bigr]_{q,q}/
\Vert X^{[q]}_{s,1}\Vert^2 \label{sec1b:eq3}
\end{align}

Here $q$, again, denotes the observation node (see discussion in \cref{sec1a}) 
and the various complex tensor operators are given through  

\begin{align}
\Psi &=  \sum_p \frac{
U_{1,1}\Lambda_{1,0}^{\top}\Lambda_{2,1-p}^*U_{2,p}^{\dagger}}{j\omega_0(p-1)
- \mu_{2}} \label{sec1b:eq4} \\
\Phi_{\rho} &= \frac{
U_{1,\rho}\Lambda_{1,0}^{\top}\Lambda_{2,\rho-1}^*U_{2,1}^{\dagger}}{j\omega_0(1-\rho)
- \mu_{2}} +  \frac{\sum_p
U_{2,p}\Lambda_{2,\rho-p}^{\top}\Lambda_{2,\rho-1}^*U_{2,1}^{\dagger}}{j\omega_0(1-p)
- 2\mu_{2}} \label{sec1b:eq5}
\end{align}

The result in \crefrange{sec1b:eq1}{sec1b:eq5} describes the ILO-PMM. 
This model represents the first ever, at-least to the author's knowledge, rigorous 
TD FP (non-phenomenological/empirical) representation of ILO PNOISE response. 
Established FP methodologies, such as \emph{e.g.} the C-MATRIX class of FD 
algorithms used in \emph{e.g.} the \emph{pnmx} routine implemented in the 
Keysight-ADS\textsuperscript{\textcopyright} EDA suite, are exclusively 
numerical in nature \emph{i.e.} \emph{black-box routines} 
(see discussion in introduction). The ILO-PMM advances the current 
SOA by providing an explicit closed-form expression for the ILO PNOISE spectrum. 
The rigorous ILO-PMM modelling framework both advances and 
repairs the flawed linearized FD FP schemes (\emph{e.g.} the  C-MATRIX methodology), 
involving, among other things, artificial singularities in the PNOISE spectrum 
\cite{demir2000}, by calculating the response through direct stochastic integration of the 
raw nonlinear circuit SDE's\cite{djurhuus22}. 
The addition of the new analytical tool-set, discussed above, will lead to the invention 
of new  design rules for synthesis of circuits with optimal noise performance.

\section{ Comparing the ILO-PMM to Established Analytical Models. }
\label{sec2}

The ILO-PMM framework, \crefrange{sec1b:eq1}{sec1b:eq5}, represents the only 
example of a FP (\emph{i.e.} non-phenomenological/empirical) analytical 
model of ILO noise response. Given this fact, 
the following question is relevant : 
why does the ILO-PMM not resemble the earlier, 
well-established analytical model representations? 
The disconnect becomes obvious when comparing 
\cref{sec1b:eq1} the established standard form of the ILO PNOISE spectral 
characteristic which can be written \cite{kurokawa1968,ramirez2008,Chang97,Chang97_2,Lynch01}

\begin{equation}
  \widehat{\mathfrak{L}}_{\text{\tiny ILO }}(\omega_m) = 
  \frac{ \Omega_{3\text{dB}}^2\mathfrak{L}_{\text{P}}(\omega_m) + N_{\text{S}}}
  { \omega_m^2 + \Omega_{3\text{dB}}^2} \label{sec2:eq1}
  \end{equation}

where $\Omega_{3\text{dB}}$ is the pole of the ILO spectrum 
$\mathfrak{L}_{\text{P}}(\omega_m) : \mathbb{R}\to \mathbb{R}$ is the PNOISE 
spectrum of the free-running P-OSC and $N_{\text{S}}\in \mathbb{R}$ is some 
scalar representing noise-contribution from the S-OSC circuit (see \cref{sec1a:fig1}.(b)). 
The standard-form characteristic in \cref{sec2:eq1} is 
generally reached using less rigorous methods (see  discussion below), however, 
this does not overturn the fact such schemes have been 
highly successful in predicting the noise response of a large 
class of important circuits. It is therefore 
essential to be able to relate and connect the ILO-PMM, developed 
herein, to these earlier well-established results. 
Below, this issue is confronted within the context of 
the single most successful and influential 
model ever published on ILO noise response : the Kurokawa model\cite{kurokawa1968}.

\subsection{The Kurukawa Methodology.}

\label{sec2a}

In 1968 by Kaneyuki Kurokawa developed a novel methodology for 
predicting the noise-response of near-sinusoidal (high-Q) ILO circuits 
perturbed by weak noise sources. 
His approach centered around a \emph{quasi-sinusoidal} 
assumption for the full noisy ILO solution and a restriction 
to a planar representation for the oscillator units (\emph{i.e.} 2-D oscillators). 
The term \emph{quasi-sinusoidal} 
(Q-SINUS) implies that the PSS can be approximated as purely 
sinusoidal (higher harmonics are discarded) with a slow-moving 
phase/amplitude envelope\footnote{The Kurokawa ILO PSS 
is written $x_s(t) = \Re\{ A_0\exp(j\omega_0t + \phi_0)\}$,
where $A_0,\phi_0 \in \mathbb{R}$ are the steady-state amplitude and phase offset 
parameters. The full noise-perturbed solution 
is then written $x(t) = x_s(t) + \delta x(t) = 
\Re\{ A(t)\exp(j\phi(t))\}$, where $\delta x(t) : \mathbb{R} \to \mathbb{R}^2$ 
is the slow-moving envelope which 
can be written in-terms of the amplitude and phase envelopes  
$\delta_{A/\phi}(t)$ \emph{i.e.} $A(t) = A_0 + \delta_A(t)$,  
and $\phi(t) = \omega_0t + \phi_0 + \delta_{\phi}(t)$. Using this notation 
a complex envelope equations can be formulated. 
Using averaging methods, real differential equations 
for the envelopes, $\delta_{A/\phi}(t)$, are derived. Finally, 
Fourier transforming these expressions the PNOISE spectrum can be calculated 
which takes the form shown in \cref{sec2:eq1}.\label{sec2a:foot1}} induced by the weak noise source drive. The novel modelling strategy, developed 
in the seminal paper \cite{kurokawa1968} (see \cref{sec2a:foot1} for a brief review), has since proved highly successful 
in developing useful models for all 
kinds of oscillator configurations (coupled and singles) driven by weak noise\cite{ramirez2008,
Chang97,Chang97_2,Lynch01,djurhuus2005,
djurhuus2005_2,djurhuus2006}. The Kurokawa methodology represents an 
example of the phenomenological/empirical modelling 
approach discussed above. It is given this label  
since it is predicated on a specific type of oscillator PSS (sinusoidal), a specific 
type of solution (\emph{i.e.} Q-SINUS), and furthermore is restricted 
to 2-D oscillator units. Below, the Q-SINUS modelling approach is briefly introduced, 
and it is then shown how to connect this important methodology to 
the novel FP representation proposed herein.

\subsubsection{Rediscovering the Kurokawa Model : Creating Reduced-Order Equivalent Model.}
\label{sec2a1}

Both the ILO-PMM framework, 
which applies to all possible ILO circuits and solutions 
(see discussion in \cref{sec1b}), 
and the Kurokawa model claim to correctly predict the PNOISE response of an 
ILO circuit generating a Q-SINUS solution. Given this fact, it   
must then be possible to somehow connect these two very different methodologies. 
Indeed, as will be shown, it is possible to construct reduced-order model 
description, the K-ILO model, from the ILO-PMM framework which constitutes an 
equivalent FP representation of the original phenomenological Kurokawa model. 
The purpose of performing this operation 
is two-fold as will now be discussed.


\begin{enumerate}
\item By producing this reduced-order K-ILO model we are effectively proving that 
the ILO-PMM supersedes and encompasses 
the Kurokawa Q-SINUS methodology. This result is not limited 
to the specific Kurokawa model but extends to 
most, if not all, non-rigorous/approximative modelling strategies. This must 
be so since the ILO-PMM represents a FP (non-phenomenological) analytical model 
which is applicable to all possible circuits and solutions. Hence, 
any such non-rigorous, phenomenological and/or empirical modelling attempt 
must be represented within the ILO-PMM framework as \underline{a reduced-order sub-model}. 
The ILO-PMM hence both \underline{extends and replaces} all these  
previously published non-rigorous analytical models. This 
result advances the SOA in the field of ILO noise analysis. It is 
of fundamental importance with far-reaching implications for future research. 
\item Developing the FP reduced-order, equivalent Kurokawa model 
gives unique insights into which components 
of the original representation are expunged or neglected in-order 
to reach this downsized version. The analytical work allows us 
to understand and quantify the \emph{bounds} of these 
phenomenological strategies. It establishes 
a \underline{range-of-application} (see \cref{sec2b:tab1,sec2b:lem1} below) 
for the Kurokawa (Q-SINUS) class of models. Outside this range, 
this type of model breaks down \emph{i.e.}
is no longer able to correctly predict the response. This is an 
important result which furthers our understanding of ILO noise modelling 
and analysis with many potential applications for practical circuit design.   
\end{enumerate}

The analysis starts with the following result.

\begin{lemma}
\label{sec2a1:lem1}

Assuming Q-SINUS operation, a reduced-order representation of the 
ILO-PMM, referred to as the K-ILO model, can be calculated. This 
model is an equivalent version of the original 
Kurokawa result\cite{kurokawa1968}. The PNOISE spectrum, 
$\mathfrak{L}_{\text{\tiny K-ILO }}(\omega_m)$, 
calculated using this reduced-order model has the form

\begin{equation}
  \mathfrak{L}_{\text{\tiny K-ILO }}(\omega_m) =
  \frac{ \Delta^{\text{\tiny (K)}}_{0} + |\mu_{2}|^2 \mathfrak{L}_{\text{P}}(\omega_m)}{ |\mu_{2}|^2 +  
  \omega_m^2}
  \label{sec2a1:eq1}
\end{equation}

with $\Delta^{\text{\tiny (K)}}_0 \in \mathbb{R}$ being a real scalar defined through

\begin{equation}
  \Delta^{\text{\tiny (K)}}_0 =
  2\bigl[\ U_{2,1}\Re\bigl\{\Lambda_{2,1}^{\top}\Lambda_{2,1}^*\bigr\}
  U_{2,1}^{\dagger} \bigr]_{q,q}/\Vert X^{[q]}_{s,1}\Vert^2 + 5\omega_0^2c
  \label{sec2a1:eq2}
\end{equation}

where all parameters and notation is explained in \cref{sec1a} (see text accompanying \cref{sec1a:eq4,sec1a:eq5}) 
and $\mathfrak{L}_{\text{P}}(\omega_m) : \mathbb{R} \to \mathbb{R}$ refers to the PNOISE 
spectrum of the free-running P-OSC circuit (see \cref{sec1a:fig1}.(b)). 

\end{lemma}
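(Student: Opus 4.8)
\noindent\emph{Proof proposal.}
The plan is to start from the general ILO-PMM spectrum \cref{sec1b:eq1} together with the operator definitions \crefrange{sec1b:eq2}{sec1b:eq5}, and to impose, one at a time, the three structural hypotheses that define Q-SINUS operation: (i) the synchronized PSS is a pure tone, so $X_{s,\nu}=0$ for $|\nu|\neq 1$, which forces the Floquet phase-modes $u_i(t)$, their duals $v_i(t)$, and hence the lambda-vectors $\lambda_i(t)=v_i^{\top}(t)B(x_s(t))$, to carry only low-order harmonics, collapsing every infinite sum in \crefrange{sec1b:eq1}{sec1b:eq5} to finitely many terms; (ii) each oscillator unit is planar, which pins down the explicit form of $u_1,u_2,v_1,v_2$ — the phase direction $\dot x_s$ and the amplitude direction — and lets all tensor contractions be re-expressed through a single complex envelope per unit, exactly as in Kurokawa's construction (cf. \cref{sec2a:foot1}); and (iii) the injected noise is weak, so the phase-diffusion constant $c$ is small and the locking bandwidth dominates, $\omega_0^{2}c\ll|\mu_2|\ll\omega_0$.

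First I would dispose of the mode-$1$ (non-summed) term of \cref{sec1b:eq1}. Because the coupling from the P-OSC to the S-OSC is unilateral, the P-OSC sub-dynamics are unchanged inside the ensemble, so the ensemble's zero-exponent Floquet direction ($\mu_1=0$) restricts to the free-running P-OSC phase mode and carries the same operating frequency $\omega_0$; by the single-oscillator PMM \cite{kartner1990,demir2000,traversa2011} this term is then the P-OSC Lorentzian $\mathfrak{L}_{\text{P}}(\omega_m)$, the corrections $\alpha,\beta$ in \cref{sec1b:eq2} being $O(c)+O(|\mu_2|/\omega_0)$ and discardable at leading order. The factor $|\mu_2|^{2}/(|\mu_2|^{2}+\omega_m^{2})$ multiplying $\mathfrak{L}_{\text{P}}$ in \cref{sec2a1:eq1} is not present in this raw term; it is recovered only when the mode-$1$ and mode-$2$ contributions are placed over the common denominator $|\mu_2|^{2}+\omega_m^{2}$, the remainder $\mathfrak{L}_{\text{P}}\,\omega_m^{2}$ being essentially constant ($=\omega_0^{2}c$) across the band of interest, hence absorbable into the additive scalar.

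Next I would reduce the mode-$2$ term, that is, the $\rho$-sum in \cref{sec1b:eq1}. Hypothesis (i) truncates this sum to $\rho=\pm1$; inside $\Phi_{\pm1}$ \cref{sec1b:eq5} only the near-resonant pieces survive — those whose denominator is $-\mu_2$ (first term) or $-2\mu_2$ (second term, $p=1$) — every other choice of $\rho,p$ producing a denominator $\sim j\omega_0\times(\text{nonzero integer})$ and hence a contribution smaller by $O(|\mu_2|/\omega_0)$. Using the planar forms of $u_2,v_2$, the reality of $\mu_2$, the biorthonormality $\langle u_i,v_j\rangle=\delta_{ij}$, and the harmonic-balance identities relating the $\Lambda_{2,\rho}$ (the same ones that define the phase-diffusion constant $c$), the retained pieces contract to the quantity $2\bigl[U_{2,1}\Re\{\Lambda_{2,1}^{\top}\Lambda_{2,1}^{*}\}U_{2,1}^{\dagger}\bigr]_{q,q}/\Vert X^{[q]}_{s,1}\Vert^{2}$ of \cref{sec2a1:eq2}; here the $\rho=-1$ term is the complex conjugate of the $\rho=+1$ term, so the $\Gamma_{\rho}\omega_m$ parts cancel pairwise while the $\Delta_{\rho}$ parts add, producing the prefactor $2$. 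Dropping $\omega_0^{2}\rho^{2}c$ against $|\mu_2|$ in the denominators by (iii) leaves the single pole $1/(|\mu_2|^{2}+\omega_m^{2})$, and the $O(c)$ remainders that were not discarded — from this term and from the mode-$1$ remainder noted above — collect into the additive $5\omega_0^{2}c$. Putting the two pieces over the common denominator and reading off the numerator then gives \cref{sec2a1:eq1} with $\Delta^{\text{\tiny (K)}}_0$ as in \cref{sec2a1:eq2}.

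The main obstacle is the bookkeeping of the subleading terms: deciding precisely which pieces of $\Psi$ \cref{sec1b:eq4} and $\Phi_{\pm1}$ \cref{sec1b:eq5} survive the Q-SINUS truncation, and then verifying that the survivors contract \emph{exactly} to $\Re\{\Lambda_{2,1}^{\top}\Lambda_{2,1}^{*}\}$ with the correct numerical prefactors — the overall $2$, and especially the $5$ multiplying $\omega_0^{2}c$, which only appears once the mode-$1$ far-from-carrier remainder is added to the $O(c)$ part of the mode-$2$ contraction. This step requires the explicit planar-oscillator Floquet data together with careful use of $\langle u_i,v_j\rangle=\delta_{ij}$ and the Parseval/harmonic-balance relations defining $c$; the remaining manipulations are routine algebra.
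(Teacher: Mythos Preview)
Your overall strategy matches the paper's: impose the Q-SINUS hypotheses, use them to truncate the harmonic sums in \cref{sec1b:eq4,sec1b:eq5}, keep only near-resonant denominators via $|\mu_2|\ll\omega_0$, then use $\omega_0^2c\ll|\mu_2|$ to collapse the remaining Lorentzians onto a single pole and absorb the mode-$1$ term via the common-denominator trick \cref{app1:eq6}. Two of your steps, however, miss the actual mechanism.

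First, the reason $\alpha,\beta$ drop out is not a smallness estimate of order $O(c)+O(|\mu_2|/\omega_0)$. Every term in $\Psi$ \cref{sec1b:eq4} carries the factor $\Lambda_{1,0}^{\top}$, and under Q-SINUS the key observation (argued in the paper via $u_1=\dot x_s$ sinusoidal $\Rightarrow v_1,v_2$ sinusoidal by biorthogonality $\Rightarrow \lambda_i$ zero-DC N-SINUS for constant $B$) is that $\Lambda_{1,0}=0$. This kills $\Psi$ outright (to $O(\epsilon)$) and simultaneously kills the \emph{first} summand of $\Phi_\rho$ in \cref{sec1b:eq5}, which you instead retain as a ``near-resonant piece with denominator $-\mu_2$''.

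Second, and more seriously, the surviving $\rho$-indices are not $\pm1$. In the remaining (second) summand of $\Phi_\rho$, the dominant $p=1$ term has numerator proportional to $\Lambda_{2,\rho-1}^{\top}\Lambda_{2,\rho-1}^{*}$; since $\Lambda_{2,j}=O(\epsilon)$ for $j\neq\pm1$, the non-negligible contributions occur at $\rho-1=\pm1$, i.e.\ $\rho=0$ and $\rho=2$ (see \cref{app1:eq3}). Your choice $\rho=\pm1$ forces $\Lambda_{2,0}$ or $\Lambda_{2,-2}$ into the numerator, both $O(\epsilon)$, so the terms you keep actually vanish. With the correct indices one gets $K_{-1}$ and $K_{1}$ as in \cref{app1:eq4}; the factor $2$ in \cref{sec2a1:eq2} then comes from $\Lambda_{2,-1}=\Lambda_{2,1}^{*}$ (reality of $\lambda_2$), and the $5\omega_0^2c$ from combining the mode-$1$ numerator $\omega_0^2c$ with the $\rho=2$ numerator piece $\omega_0^2\rho^2c=4\omega_0^2c$ once both are placed over $|\mu_2|^2+\omega_m^2$. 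Fixing the $\rho$-selection and the $\Lambda_{1,0}=0$ mechanism, the rest of your outline goes through.
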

\begin{proof}
    see \cref{app1:sec1} 
\end{proof}

Comparing \cref{sec2a1:eq1}  with the standard-form expression in \cref{sec2:eq1} 
yields $|\mu_2| = \Omega_{\text{3dB}}$ and $\Delta^{\text{\tiny (K)}}_0  = 
N_s$.

\subsection{Comparing the ILO-PMM to the Reduced-Order K-ILO Model.}
\label{sec2b}

The K-ILO model, derived above in \cref{sec2a1:lem1}, is 
a \emph{reduced-order} version of the complete FP ILO-PMM scheme derived in 
\cref{sec1b}. By definition, this implies that 
the ILO-PMM must contain additional information not available in this Kurokawa 
equivalent model representation. Below we set out to quantify these ideas. 
The purpose and motivation for this work was discussed above.  
Comparing the ILO-PMM and K-ILO models, defined in \crefrange{sec1b:eq1}{sec1b:eq5} 
and \cref{sec2a1:eq1,sec2a1:eq2}, 
several interesting differences are observed. Three of these are listed 
in \cref{sec2b:tab1}  and the main conclusions from this discussion 
are summarized in \cref{sec2b:lem1} below. 

\begin{lemma}[Kurokawa Model Range-of-Application]
The Kurokawa Q-SINUS methodology, as described in \cite{kurokawa1968}, 
will fail  to correctly predict the ILO  PNOISE spectrum if 
the circuit PSS induces, \textbf{1) :} $\Lambda_{1,0} \neq 0$, \textbf{2) :} $\Lambda_{2,i} \neq  0$ for 
$i \neq \pm 1$, \textbf{3) :} $|\mu_2| \sim \omega_0^2 c$ (strong P-OSC noise drive).
\label{sec2b:lem1}   
\end{lemma}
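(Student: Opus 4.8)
The plan is to prove \cref{sec2b:lem1} by a direct, term-by-term comparison of the rigorous ILO-PMM spectrum, \crefrange{sec1b:eq1}{sec1b:eq5}, against the reduced-order K-ILO spectrum, \cref{sec2a1:eq1,sec2a1:eq2}. By \cref{sec2a1:lem1} the latter is the faithful first-principles rendering of the Kurokawa characteristic \cref{sec2:eq1}, obtained from the former by the Q-SINUS reduction; hence the difference $\mathfrak{L}_{\text{ILO}}(\omega_m) - \mathfrak{L}_{\text{K-ILO}}(\omega_m)$ is a sum of explicitly identifiable residual terms, each traceable to one discarded ingredient of that reduction. The strategy is then, for each of the three hypotheses, to isolate one residual that is non-vanishing precisely when the hypothesis holds and to name an offset-frequency window in which it dominates the correction, so that agreement between the two spectra cannot be accidentally restored.

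First I would extract from the proof of \cref{sec2a1:lem1} the precise list of simplifications used to pass from \cref{sec1b:eq1} to \cref{sec2a1:eq1}: (i) the planar, near-sinusoidal assumption truncates the PSS, and with it the Floquet harmonic vectors $U_{i,j}$ and $\Lambda_{i,j}$, to their $j=\pm 1$ content, in particular setting $\Lambda_{1,0}=0$ and $\Lambda_{2,i}=0$ for $i\neq\pm 1$; and (ii) the weak-drive ordering $\omega_0^2 c \ll |\mu_2|$ collapses the $\rho$-dependent pole $|\mu_2| + 0.5\omega_0^2\rho^2 c$ of the second sum in \cref{sec1b:eq1} to the single pole $|\mu_2| = \Omega_{3\text{dB}}$ and renders the pole $0.5\omega_0^2 c$ of the leading term negligible. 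For hypothesis \textbf{1}, $\Lambda_{1,0}\neq 0$ makes the operator $\Psi$ of \cref{sec1b:eq4} and the first summand of $\Phi_\rho$ in \cref{sec1b:eq5} acquire contributions proportional to $\Lambda_{1,0}$, which propagate through \cref{sec1b:eq2,sec1b:eq3} into $\alpha,\beta,\Delta_\rho,\Gamma_\rho$; none of these appear in $\Delta^{\text{\tiny (K)}}_0$ of \cref{sec2a1:eq2}, so evaluating both spectra at $\omega_m\to 0$ exposes a near-carrier level mismatch of order $\Vert\Lambda_{1,0}\Vert^2/\Vert X^{[q]}_{s,1}\Vert^2$. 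For hypothesis \textbf{2}, $\Lambda_{2,i}\neq 0$ with $i\neq\pm 1$ leaves nonzero cross-harmonic terms in the second summand of \cref{sec1b:eq5}, namely those with $\rho-p\neq\pm 1$, whereas $\Delta^{\text{\tiny (K)}}_0$ retains only the leading diagonal contribution built from $\Lambda_{2,\pm 1}$, i.e. the $2[U_{2,1}\Re\{\Lambda_{2,1}^{\top}\Lambda_{2,1}^*\}U_{2,1}^{\dagger}]_{q,q}$ piece; I would show these survivors populate the $\rho\neq 0$ resonances of \cref{sec1b:eq1}, which the single-pole form \cref{sec2:eq1} cannot reproduce.

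Hypothesis \textbf{3} is handled through the pole structure rather than the numerators. When $|\mu_2|\sim\omega_0^2 c$, simplification (ii) is no longer legitimate: the leading term of \cref{sec1b:eq1} carries a pole at $0.5\omega_0^2 c$ that is now comparable to $|\mu_2|$, and the resonances $|\mu_2| + 0.5\omega_0^2\rho^2 c$ of the second sum are no longer clustered at $\Omega_{3\text{dB}}$, so the ILO-PMM spectrum has a genuinely multi-pole shape that cannot be written in the single-pole cascade form \cref{sec2:eq1}/\cref{sec2a1:eq1}; comparing the two expressions at offsets $\omega_m\sim|\mu_2|$ makes the discrepancy explicit. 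Concatenating the three cases proves the lemma, since each hypothesis independently breaks the reduction of \cref{sec2a1:lem1}, while if none holds that reduction --- and hence the agreement of the Kurokawa prediction with the rigorous ILO-PMM --- goes through.

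The main obstacle I anticipate is twofold. First, ruling out accidental cancellations: I must argue that at the probe frequencies chosen above the flagged residual is genuinely the dominant correction and is generically non-zero, which forces me to invoke genericity of the Floquet data $\{U_{i,j},\Lambda_{i,j}\}$ rather than establishing a pointwise algebraic identity. Second, pinning down the precise bridge between ``the circuit PSS'', the object named in the statement, and the Floquet $\lambda$-vectors that actually carry the residuals: one must verify under what structural conditions on the PSS and on the noise-modulation matrix $B(x_s(t))$ the Q-SINUS reduction of \cref{sec2a1:lem1} legitimately sets $\Lambda_{1,0}=0$ and $\Lambda_{2,i}=0$ for $i\neq\pm 1$, so that the three hypotheses are exactly the contrapositive flags of that reduction. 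These two points, rather than the routine term-by-term algebra, are where the real work lies.
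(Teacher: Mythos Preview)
Your approach is essentially the same as the paper's: both argue the lemma by comparing the full ILO-PMM in \crefrange{sec1b:eq1}{sec1b:eq5} with the reduced K-ILO model of \cref{sec2a1:lem1}, noting that the three hypotheses are exactly the negations of the three simplifying assumptions used in the reduction (your items (i)--(ii)), and invoking the equivalence of K-ILO with the Kurokawa model. The paper, however, treats this as a one-line corollary --- its proof is literally ``follows directly from the discussion in \cref{sec2b:tab1} and the fact that the Kurokawa and K-ILO model representations are equivalents'' --- whereas you propose a considerably more detailed term-by-term audit with probe frequencies and a residual-dominance argument. Your anticipated obstacles (ruling out accidental cancellation, pinning the PSS-to-$\lambda$ bridge) are real and are \emph{not} addressed by the paper; it simply asserts failure whenever an assumption of the reduction is violated, without excluding non-generic cancellation. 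So your plan is correct and strictly more rigorous than what the paper supplies, but you should not expect to find those finer points resolved there.
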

\begin{proof}
follows directly from the discussion in \cref{sec2b:tab1} and the fact that 
the Kurokawa and K-ILO model representations are equivalents. 
\end{proof}

\begin{table}

\begin{enumerate}
  \item The development of the K-ILO model involved the assumption $\Lambda_{1,0} = 0$ 
  (\emph{i.e.} zero DC component of $\lambda_1$, see \cref{sec1a}). 
  For circuits with $\Lambda_{1,0} \neq 0$  
  the K-ILO spectrum might differ significantly from the correct FP ILO-PMM solution. 
  The exact issues causing, $\Lambda_{1,0} \neq 0$, for a given circuit 
  are not always entirely understood or easily described. This 
  is especially true for higher dimensional ($n>2$) systems.
  It is in many cases a consequence of strongly non-linear circuit nodes 
  which induce DC components in the corresponding dual Floquet vector, 
  $v_1(t)$, (see text accompanying \cref{sec2a1:eq1,sec2a1:eq2}). Furthermore, 
  a non-constant noise modulation matrix, $B(x_s(t))$, often 
  induces the aforementioned result due mixing products landing at DC. 
  \item The K-ILO model further assumes that the DC and higher-order terms 
  of $\lambda_2$ be zero \emph{i.e.} $\Lambda_{2,i} = 0$, for $i\neq \pm 1$. 
  Loosely speaking, this scenario could be induced by simply by the presence of 
  non-sinusoidal nodes in the circuit PSS vector. 
  In other words, non-sinusoidal PSS nodes could, to a certain extent at-least, 
  induce non-sinusoidal nodes in $\lambda_2$. This relation, described here, 
  is only based on empirical observation and is in no way 
  to be considered a scientific law. It is 
  not generally not known how the harmonic content of a linear-response 
  operator, like $\lambda_2$, is connected to the underlying PSS, especially 
  not for higher dimensional circuits. Nevertheless, in  \cref{sec3} below, 
  an example is discussed where a noise-source, injected into a 
  common-ground node (non-sinusoidal PSS node) of a cross-coupled CMOS 
  circuit, results in a non-sinusoidal $\lambda_2$ node.  
  \item The original ILO-PMM model describes the spectrum as 
  an infinite sum of Lorentzian spectra (see \crefrange{sec1b:eq1}{sec1b:eq5}) whereas the standard characteristic 
  only involved the sum of two such spectra (see \cref{sec2:eq1,sec2a1:eq1,sec2a1:eq2}). 
  As was discussed in \cref{app1}, for a strong noise drive we have 
  $|\mu_2| \sim \omega_0^2 c$ (same order of magnitude) and the poles of the 
  Lorentzians in the sum separate and each term in \cref{sec1b:eq1} must 
  be included separately. 
  Here, the reduced-order K-ILO model will, once again, diverge from the 
  correct response. For 
  weak noise drive, which is the case consider herein, $|\mu_2| \gg \omega_0^2 c$ 
  and the poles are more-or-less constant for all indices of interest (for larger 
  indices the terms in \cref{sec1b:eq1} will approach zero with the 
  nominator of the fractions) and we regain the single-pole characteristic 
  predicted in \cref{sec2a1:eq1}.      
\end{enumerate}
\caption{Comparing the ILO-PMM, \crefrange{sec1b:eq1}{sec1b:eq5}, 
to the reduced-order K-ILO model, \cref{sec2a1:eq1,sec2a1:eq2}. Note that 
since the K-ILO is an equivalent representation of the original 
Kurokawa model\cite{kurokawa1968} all conclusions apply equally to this methodology (see 
\cref{sec2b:lem1}). }
\label{sec2b:tab1}
\end{table}

\section{Numerical Experiments}
\label{sec3}

\begin{figure}[!h]
\begin{center}
\includegraphics[scale=0.85]{./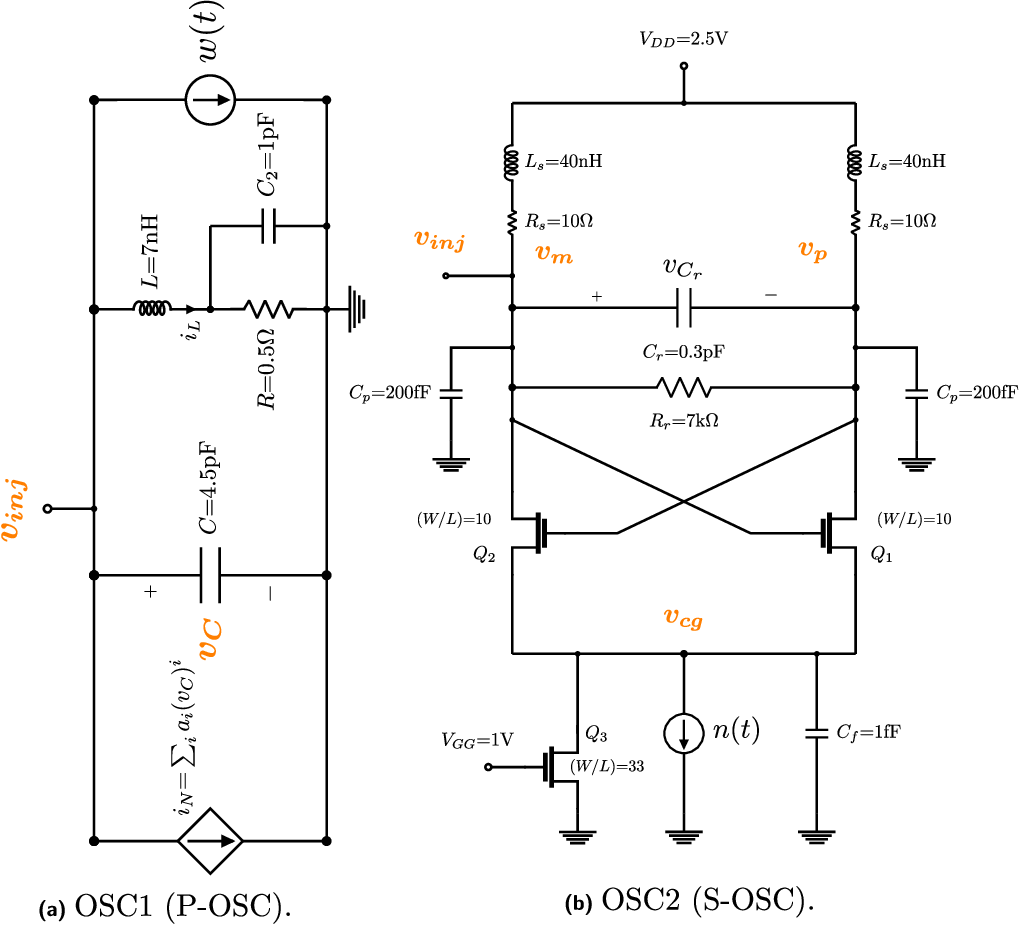}
\end{center}
\caption{ The figure shows the  two oscillator units which 
are coupled, as shown in \cref{sec1a:fig1}.(b), to produce the 
ILO circuit discussed here. The unilateral buffer amplifier (see \cref{sec1a:fig1}.(b)), 
which connects to two circuits through the $v_{inj}$ ports, has a 4th order polynomial in-out characteristic $v_{out} = \sum_{k=0}^3 g_{ck}(v_{in})^k$.
Component values are fixed to values shown unless otherwise stated. 
(a) : OSC1 (P-OSC) is a simple LC negative-resistance oscillator.
Free-running oscillation frequency $f_0 = 900.9 \mathrm{MHz}$, corresponding 
to a period of $T_0 = 1.11\mathrm{ns}$. VCCS (negative resistor) parameters :  
$a_1 = -1.0\mathrm{mS}$, $a_3 = 100\mathrm{\mu A/V^3}$ and $a_i = 0$ for $i \neq 1,3$. 
The white-noise current source, $w(t)$, has the rms current strength 
$i_{\text{w,rms}} = 1\mathrm{pA/\sqrt{Hz}}$.
(b) : OSC2 (S-OSC) is a CMOS cross-coupled LC-tank circuit. 
Free-running oscillation frequency $f_0 = 892.86 \mathrm{MHz}$, corresponding 
to a period of $T_0 = 1.12\mathrm{ns}$. The CMOS devices are considered noiseless.
The rms current of white-noise source, 
$n(t)$, is given as $i_{\text{n,rms}} = 70.7\mathrm{pA/\sqrt{Hz}}$. 
CMOS model (all transistors) : 
$V_{\text{th0}}=0.5\mathrm{V}$, $\lambda = 0.05\mathrm{V^{-1}}$, $k_p = 120\mu \mathrm{A/V^2}$ 
which is also the model used in \cite{Maffezzoni13}. The CMOS channel width-to-length ratios 
are listed next to the respective components. } \label{sec3:fig1}
\end{figure}

Consider the ILO circuit created by coupling the two oscillator units, 
shown in \cref{sec3:fig1}, according to the diagram in 
\cref{sec1a:fig1}.(b). Here the LC oscillator, OSC1, 
is connected unilaterally, through a buffer amplifier, to the cross-coupled 
CMOS unit, OSC2. Following the schematic in \cref{sec1a:fig1}.(b), 
OSC1, the P-OSC, is free-running while OSC2, the S-OSC, assuming a synchronized PSS is reached, will be locked 
to the injected signal. The circuit described here oscillates at a frequency around 
$f_0 = 1/T_0 = 0.9 \mathrm{GHz}$. Herein, this circuit
will be used in a series of simulation trials aimed at investigating and verifying 
the novel ILO-PMM framework, which was developed above in \cref{sec1b}.  
A study is conducted, comparing the spectrum, calculated using the 
ILO-PMM framework, with the  
results produced by the \emph{pnmx} numerical routine; a module contained in the 
commercial Keysight-ADS\textsuperscript{\textcopyright} EDA suite.
Specific attention is furthermore paid to the points raised 
in \cref{sec2b:tab1,sec2b:lem1}. The purpose of this exercise 
is to explore the operational bounds of the phenomenological Kurokawa 
methodology, represented herein by the reduced-order K-ILO equivalent model developed in 
\cref{sec2a1:lem1}. 
 \par
\Cref{sec3:fig2}.(a) shows the ILO PSS for a select number of 
circuit nodes of oscillator units shown in \cref{sec3:fig1}.
Here, the oscillator units are coupled linearly with $g_{c1} = 35.0 \mu \text{A/V}$ 
(see \cref{sec3:fig1} caption). From this  figure it follows that 
node $v_{cg}$ contains significant higher harmonic content (\emph{i.e.} 
2nd, 4th \emph{etc.}) as this is a common-ground node. The 
PSS at this node does therefore \underline{not} conform to the 
near-sinusoidal PSS description on which the K-ILO reduced-order 
model is built (see discussion in \cref{app1}). In \cref{sec3:fig2}.(b), the 
ILO-PMM PNOISE spectral-density, 
derived from \cref{sec1b:eq1}, is plotted for two parameters sets (see figure caption), 
together with corresponding curves produced by the \emph{pnmx} numerical routine. 
Inspecting \cref{sec3:fig2}, the solutions of these two, 
qualitatively very different, numerical algorithms (ILO-PMM/\emph{pnmx}) 
seem to overlap for the most of the frequency range, further verifying 
the novel FP model proposed herein. 

\subsection{Exploring Point \#1 of \Cref{sec2b:tab1}.}

The circuit in \cref{sec3:fig1} does not involve modulated noise 
meaning that the noise-modulation matrix $B \in \mathbb{R}^{n\times p}$ 
is constant. The $p$ columns of, $B$, each correspond to one of the resistors or external 
current noise sources in this circuit. Let 
$j_{w,n} \in [1,p]$ be the noise-indices corresponding to the 
two external noise sources $w(t),n(t) : \mathbb{R} \to \mathbb{R}$ seen in \cref{sec3:fig1}. These two noise  
These two sources dominate the noise response, described in-terms of the lambda functions 
$\lambda_i(t) = v_i^{\top}(t)B : \mathbb{R} \to \mathbb{R}^p$, 
$i=1,2$ (see discussion in \cref{sec1a,sec1b}). 
Let $\lambda_{i,j}(t)$ denote the $j$th component of this 
$p$-dimensional vector, then \cref{sec3:fig3}.(a) plots 
$\lambda_{1,j_w}(t)$, $\lambda_{2,j_w}(t)$, $\lambda_{2,j_n}(t)$. 
Note that the S-OSC (OSC2) noise-source, $n(t)$, does not contribute to $\lambda_1$. 
From \cref{sec3:fig3}.(a) it follows that the dominant noise 
node component of $\lambda_1$, \emph{i.e.} $\lambda_{1,j_w}(t)$, 
has zero DC, $\Lambda_{1,0} = 0$. Following the discussion in point \#1 of \cref{sec2b:tab1}, 
this outcome was expected as the OSC1 circuit shown in \cref{sec3:fig1} sinusoidal 
in nature and hence conform to the Q-SINUS (Kurokawa) representation.  
The requirement, $\Lambda_{1,0} = 0$, upon which the reduced-order 
Q-SINUS K-ILO model (see \cref{sec2,sec2b}) was conditioned,
is hence seen to hold for the circuit in \cref{sec3:fig1}. 
At this point it would seem that the ILO circuit considered here complies 
with the  Q-SINUS Kurokawa methodology. This would then suggest that the 
reduced-order K-ILO equivalent model (see \cref{sec2a1:lem1}) could substitute 
for the full FP ILO-PMM description in \crefrange{sec1b:eq1}{sec1b:eq5} 
as a reasonable approximation for th circuit discussed here. 
However, as will be shown below, this assumption is false. 

\subsection{Exploring Point \#2 of \Cref{sec2b:tab1} - the Breakdown of the Kurokawa Approach.}

\Cref{sec3:fig3}.(a) plots $\lambda_{2,j_w}(t)$ and $\lambda_{2,j_n}(t)$, 
representing contribution, due to noise sources 
$w(t),n(t)$, to the vector $\lambda_2$ (see \cref{sec2a}). Here $\lambda_{2,j_w}(t)$ is again is almost a pure sinusoidal. 
However, it is clear that $\lambda_{2,j_n}(t)$ does not fit this 
description. This should come as no surprise 
since $n(t)$ is injected into 
the common-ground node of the cross-coupled CMOS oscillator OSC2 shown 
in \cref{sec3:fig1}.(b). As noted above (see \cref{sec3:fig2}.(a)), 
the PSS at this node does \underline{not} fit 
the near-sinusoidal PSS requirement. Therefore, one 
should also not expect (loosely speaking) the lambda-vector 
components to conform either (see \# 2 of \cref{sec2b:tab1}). 
The observation implies that the reduced-order K-ILO equivalent  
developed in \cref{sec2a1}, should fail to capture the correct response. 
This can be observed in \cref{sec3:fig3}.(b) which plots 
the PNOISE spectral density of the circuit in \cref{sec3:fig1} 
for the linear coupling $g_{c1} = 35.0\mu\text{A/V}$. The figure shows the 
spectrum calculated using the both the ILO-PMM (\crefrange{sec1b:eq1}{sec1b:eq5}) 
and K-ILO (\cref{sec2a1:eq1,sec2a1:eq2}) methods as well as the numerical 
\emph{pnmx} routine. From this figure it is clear that the
K-ILO model fails to capture the spectrum especially for higher offsets which, 
which is due to the contribution of higher-harmonic content of 
$\lambda_{2,j_n}(t)$ shown in \cref{sec3:fig3}.(a). To further prove this point, 
the intensity of the common-ground noise source $n(t)$ is now decreased 
from $70.0 \text{pA/$\sqrt{\text{Hz}}$}$ to $70.7 \text{fA/$\sqrt{\text{Hz}}$}$. 
Naturally, this will diminish the effect of the contribution $\lambda_{2,j_n}(t)$ 
shown in \cref{sec3:fig3}.(a). Instead, 
the contribution due to P-OSC source, $w(t)$, as represented by the component 
function, $\lambda_{2,j_w}(t)$, will dominate the second mode response. From 
\cref{sec3:fig3}.(a), this contribution is highly sinusoidal which 
in-turn implies that modified circuit adheres to the Kurokawa model requirements 
(see \cref{app1}). Consequently, one should expect the reduced-order K-ILO 
model, developed in \cref{sec2a1}, to capture correct response of the modified 
circuit. \Cref{sec3:fig4}, once again, plots the spectrum of the circuit in \cref{sec3:fig1} 
(see \cref{sec3:fig3} caption) but this time with 
$w(t) = 70.7 \text{fA/$\sqrt{\text{Hz}}$}$. As predicted, the simple reduced-order 
K-ILO model now captures the correct PNOISE response. Finally, from 
\cref{sec2b:lem1}, all results and conclusions 
discussed here, related to the reduced-order K-ILO model developed in \cref{sec2a1:lem1}, extend 
directly to the entire class of ILO PNOISE models built using the Q-SINUS 
methodology including the original well-established and highly referenced 
original representation in \cite{kurokawa1968}.

\begin{figure}[!h]
\begin{center}
\includegraphics[scale=1.0]{./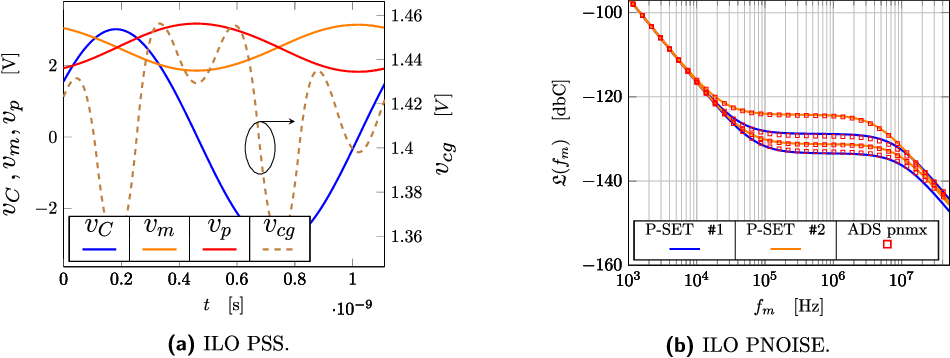}
\end{center}
\caption{ (a) : PSS solution of the ILO circuit shown in \cref{sec3:fig1}.
(b) : phase-noise spectral densities for the ILO circuit in \cref{sec3:fig1}. 
The figure shows the PNOISE spectrum for the ILO circuit 
calculated using the novel ILO-PMM model 
developed herein along with the corresponding output of 
the \emph{pnmx} routine, part of the Keysight-ADS\textsuperscript{\textcopyright} suite. 
The simulations are run for two parameter sets, PSET1 : $(C_r = 0.3035\text{pF},0.295\text{pF}, 
g_{c1} = 35 \mu\text{A/V})$ and PSET2 : $(g_{c1} = 40\mu\text{A/V},60\mu\text{A/V})$ 
with all other component values and circuit parameters fixed as listed 
in \cref{sec3:fig1}.}
\label{sec3:fig2}
\end{figure}

\begin{figure}[!h]
\begin{center}
\includegraphics[scale=1.0]{./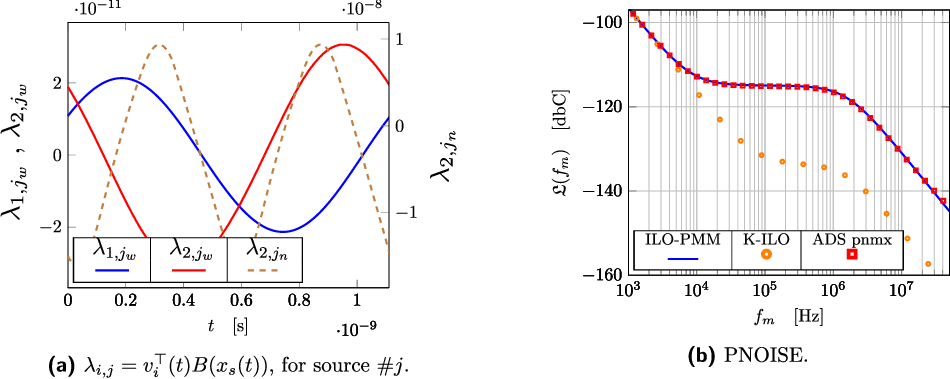}
\end{center}
\caption{ (a) : figure shows the components of $\lambda_i(t) = v_i^{\top}(t)B$, $i=1,2$, 
corresponding to Floquet modes $\mu_1=0,\mu_2 < 0.0$, and representing the  contributions  
due to noise sources $n(t),w(t) : \mathbb{R}\to  \mathbb{R}$ (see \cref{sec3:fig1}) 
which dominate the response. (b) : the PNOISE spectrum calculated using 
ILO-PMM, K-ILO model and the 
\emph{pnmx} routine. Due to the DC and even harmonic components of the contribution 
$\lambda_{2,j_n}(t)$ (see figure (a)) the reduced-order K-ILO model 
fails to predict the  correct spectrum for higher offset frequencies.}
\label{sec3:fig3}
\end{figure}

\begin{figure}[!h]
  \begin{center}
  \includegraphics[scale=1.25]{./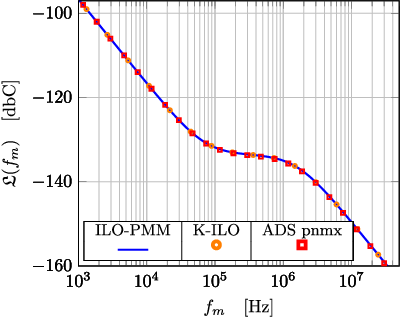}
  \end{center}
  \caption{ PNOISE spectrum of the modified 
  circuit (see discussion in text). The spectrum 
  is calculated using the ILO-PMM, 
  K-ILO model and \emph{pnmx} method. 
  The modified circuit generates a solution which 
  conforms to Q-SINUS/Kurokawa methodology and the reduced-order 
  K-ILO equivalent model is hence able to correctly predict the ILO PNOISE response. 
  This should be contrasted with the result reported in \cref{sec3:fig3}.(b) above.}
  \label{sec3:fig4}
  \end{figure}

\section{Conclusion}

We document the development of a novel time-domain model 
of injection-locked oscillator phase-noise response; the so-called 
ILO-PMM. The methodology is based on a rigorous first-principles approach 
and, as such, is applicable to all circuit topologies, system dimension, 
parameter dependencies \emph{etc.} The companion numerical 
algorithm is compatible with all major commercial circuit simulation programs. 
We explore the analysis of circuits from a macro-model perspective. It is shown that 
the ILO-PMM, developed herein, extends and replaces the standard 
models currently  found in the  literature on this topic. 
The work discussed herein advances the current 
state-of-the-art \emph{w.r.t.} both numerical and theoretical 
modelling and analysis of injection-locked oscillator noise response. 
Several new insights are uncovered with the potential for important practical 
design applications. The analytical innovations and ideas 
discussed herein will serve as inspiration for 
future publications currently being written.

\section*{Acknowledgment}

The authors gratefully acknowledge partial financial support by
German Research Foundation (DFG) (grant no. KR1016/17-1).

\appendix

\section{Proof of \cref{sec2a1:lem1}.}
\label{app1:sec1}

\label{app1}

Below we consider the  concept of a \emph{near-sinusoidal} (N-SINUS) 
vector-function, $y(t) : \mathbb{R} \to \mathbb{R}^n$

\begin{equation} 
y(t) = \alpha  + x(t) + \mathcal{O}(\epsilon)s(t) 
\label{app1:eq0}
\end{equation}

where $\alpha \in \mathbb{R}^n$ is the  DC component of the signal, 
$x(t) : \mathbb{R}\to  \mathbb{R}^{n}$ is a pure sinusoidal vector-function 
(\emph{i.e.} no higher harmonics), $s(t) : \mathbb{R}\to  \mathbb{R}^{n}$ 
is a scaled vector function $|s(t)|\leq 1$ for  all $t$, containing all 
higher harmonics and $|\epsilon| \ll 1$ a small parameter. The 
definition  in \cref{app1:eq0} describes a signal which is \emph{almost} 
sinusoidal which is the case for the PSS in the 
Kurokawa Q-SINUS methodology (see \cite{kurokawa1968,ramirez2008}).   
\par
For an autonomous oscillating system, such as a locked 
ILO circuit, the special Floquet mode 
$\mu_1 = 0$, is known to exist\cite{demir2000,kartner1990,traversa2011}. 
Furthermore, it is well established \cite{demir2000,kartner1990,traversa2011}, that 
the Floquet vector $u_1(t)$ is a scaled copy of $\dot{x}_s$, \emph{i.e.} the 
time-differential of the PSS, $x_s(t)$. As discussed in the above references, 
the proper scale factor leads to $u_1(t) = \dot{x}_s(t)$.
\par 
The Kurokawa modelling approach \cite{kurokawa1968}, assumes a near-sinusoidal 
(N-SINUS) 
planar ($n=2$) PSS $x_s(t) = \alpha + x_a(t) + \mathcal{O}(\epsilon)s(t)$ (see \cref{app1:eq0} 
for description  of terms). It then follows from the  above discussion that 
$u_1(t) = \dot{x}_s = x_b(t) + \mathcal{O}(\epsilon)g(t)$ where 
$x_b(t) = \dot{x}_a(t)$ is sinusoidal and $g(t)= \dot{s}(t)$. 
For the ILO configuration we are considering two modes $\mu_1 = 0, \mu_2 < 0.0$ (see 
\cref{sec1b}). Due to 
the Floquet bi-orthogonality condition ($v_i^{\top}(s)u_j(t) = \delta_{i,j}\delta(t-s)$) 
\cite{demir2000,djurhuus2009}, 
we get that $v_1^{\top}(t)x_b(t) = 1$ and $v_2^{\top}(t)x_b(t) = 0$, for all $t$, 
to within an error of order $|\epsilon|\ll 1$. However, since $x_b(t)$ is a sinusoidal 
planar (2D) solution this must imply that $v_i(t)$, $i=1,2$, are both N-SINUS 
vector-functions. The standard formulation of the Kurokawa model 
\cite{kurokawa1968,ramirez2008} does not involve modulated sources meaning that 
the noise matrix is constant, $B \in \mathbb{R}^{2\times p}$, and 
$\lambda_{i}(t) = v_i(t)^{\top}B$, $i=1,2$, are hence both zero-DC, N-SINUS $p$-dimensional 
vector-functions (see \cref{sec1a:eq4,sec1a:eq5} and accompanying text).
\par
From the above analysis, $\lambda_1(t)$, has zero DC, \emph{i.e.} 
$\Lambda_{1,0} = 0$, which implies that the tensor/matrix expression 
in \cref{sec1b:eq4,sec1b:eq5} reduce to

\begin{align}
\Psi &=  \mathcal{O}(\epsilon) \label{app1:eq1} \\
\Phi_{\rho} &= \frac{\sum_p
  U_{2,p}\Lambda_{2,\rho-p}^{\top}\Lambda_{2,\rho-1}^*U_{2,1}^{\dagger}}{j\omega_0(1-p)
  + 2|\mu_{2}|} + \mathcal{O}(\epsilon) \label{app1:eq2}
\end{align}

The Floquet mode vector, $u_2(t) : \mathbb{R} \to \mathbb{R}^2$, is 
zero-DC, N-SINUS (upto an error of $|\epsilon|\ll 1$) 
which follows from $v_2^{\top}(t)u_2(t) = 1$ and  
$v_2$ being zero-DC, N-SINUS as discussed above. Hence, 
$U_{2,p} = \mathcal{O}(\epsilon)$ for $p \neq \pm 1$. 
Hence, non-negligible terms (terms larger than $\mathcal{O}(\epsilon)$) 
must correspond to indices $p=\pm 1$ in \cref{app1:eq2}. However, 
since we must have $|\mu_2| \ll \omega_0$ (holds for all relevant modes, see 
\cite{Maffezzoni13}) it follows that 
the terms corresponding to $p=1$ will dominate. Furthermore, from the 
above discussion we have $\Lambda_{2,j} = \mathcal{O}(\epsilon)$, for $j \neq -1,1$. 
Combined this leaves us with the expression

\begin{equation}
  \Phi_{\rho} = \frac{K_{-1}\delta_{\rho,0} + K_{1}\delta_{\rho,2} }{2|\mu_{2}|} 
  + \mathcal{O}(\epsilon) 
  \label{app1:eq3}
\end{equation}

where 

\begin{equation}
K_s = U_{2,1}\Lambda_{2,s}^{\top}\Lambda_{2,s}^*U_{2,1}^{\dagger}\label{app1:eq4} 
\end{equation}

Inserting \cref{app1:eq1,app1:eq3,app1:eq4} into \cref{sec1b:eq1,sec1b:eq2,sec1b:eq3}, 
and discarding all negligible terms of order $\epsilon$, the  
following expression for the PNOISE spectrum of the reduced-order K-ILO 
model is produced

\begin{equation}
  \mathfrak{L}_{\text{\tiny ILO }}(\omega_m) \approx
   \frac{ (\omega_0^2c)}{ (0.5\omega_0^2c)^2 + \omega_m^2} +
  \frac{ Z_{-1}}{ |\mu_{2}|^2  + \omega_m^2} + 
  \frac{ Z_1 + (4\omega_0^2c) }{ \bigl(|\mu_2| 
  + \bigl(2\omega_0^2c\bigr) \bigr)^2 +
    \omega_m^2}
    \label{app1:eq5}
\end{equation}

where $Z_s \in \mathbb{C}$ is a possibly complex scalar defined through

\begin{equation}
Z_s  = \bigl[\ K_s \bigr]_{q,q}/\Vert X^{[q]}_{s,1}\Vert^2
\label{app1:eq5x}
\end{equation}

with $q$ being the observation node (see discussion in \cref{sec1b,sec1a} and 
\cref{sec1b:eq2,sec1b:eq3}). The first term in \cref{app1:eq5} represents the primary oscillator 
(see \cref{sec1a:fig1}.(b)) free-running 
PNOISE spectrum \cite{demir2000,kartner1990} which we write 
$\mathfrak{L}_{P}(\omega_m)$. For most offsets of interest, 
$\omega_m \gg 0.5\omega_0^2c$, as $0.5\omega_0^2c$ is a very small offset typically 
on the order of $1\mathrm{Hz}$ or lower, The following approximation then 
holds \cite{demir2000} $\mathfrak{L}_{m}(\omega_m) \approx  (\omega_0^2c)/\omega_m^2$ and 
using this expression we can carry out the following calculation 

\begin{equation}
  \mathfrak{L}_{P}(\omega_m) (\omega_m^2 + |\mu_2|^2) \approx
  (\omega_0^2c/\omega_m^2) (\omega_m^2 + |\mu_2|^2) = 
  \omega_0^2c +  (\omega_0^2c/\omega_m^2) |\mu_2|^2 = 
  \omega_0^2c +  \mathfrak{L}_{P}(\omega_m)|\mu_2|^2
  \label{app1:eq6}
\end{equation}

Finally, we want to show that $|\mu_2| \gg  
2\omega_0^2c$ which means that the following relation  

\begin{equation}
|\mu_2| \gg 2\omega_0^2c  \Leftrightarrow \frac{|\mu_2|}{\omega_0} \gg 4\pi\frac{c}{T_0}
\Leftrightarrow -\frac{1}{2\pi}\ln(\iota_2) \gg 4\pi\frac{c}{T_0}
\label{app1:eq7}  
\end{equation}

where $T_0$ is the oscillation period ($\omega_0 = 2\pi/T_0$) and 
we have introduced the Floquet characteristic multiplier, 
$\iota_2 \in \mathbb{R}$, through \cite{demir2000,kartner1990,traversa2011,djurhuus2009} 
$\iota_2 = \exp(-|\mu_2|T_0)$. A reasonable estimate for normal 
operation would be $\iota_2  \in (0.95, 0.7)$ since $\iota_2=1.0$ 
corresponds to the uncoupled scenario (see \cite{djurhuus22}). So the minimal value in 
\cref{app1:eq7} is taken for $\iota_2=0.99$ (for this example). Using 
this estimate,  \cref{app1:eq7} gives

\begin{equation}
|\mu_2| \gg 2\omega_0^2c  \Leftrightarrow  \frac{c}{T_0} \ll \ln(0.99)/(8\pi^2) \sim 10^{-4}
\label{app1:eq8} 
\end{equation}

which generally will holds as $c$ is a very small parameter. Of-course, as  
the value of $\mu_2\to 0$ (towards uncoupling) or the value of the phase-diffusion constant, 
$c$, increases (stronger P-OSC noise drive), this relation breaks down very fast. 
Since \cref{app1:eq7} holds, at-least for reasonable values of $\mu_2$ and $c$, 
we can approximate $|\mu_2| + 2\omega_0^2c \approx |\mu_2|$ in the 
denominator of the third term in \cref{app1:eq4}. Using this approximation 
together with the result in \cref{app1:eq6} we can write \cref{app1:eq5} as

\begin{equation}
  \mathfrak{L}_{\text{\tiny ILO }}(\omega_m) \approx 
  \mathfrak{L}_{\text{\tiny K-ILO }}(\omega_m) + \mathcal{O}(\epsilon)
  \label{app1:eq9}
\end{equation}

where

\begin{equation}
  \mathfrak{L}_{\text{\tiny K-ILO }}(\omega_m) = \frac{\mathfrak{L}_{P}(\omega_m)|\mu_2| + \Delta_0^{(K)}}
  { |\mu_{2}|^2  + \omega_m^2} 
    \label{app1:eq10}
\end{equation}

where $\Delta_0^{\text{\tiny (K)}} \in \mathbb{R}$ is the real scalar

\begin{equation}
  \Delta_0^{\text{\tiny (K)}} = Z_{-1} + Z_1 + 5(\omega_0^2c)
  \label{app1:eq11}
\end{equation}

which can also be written (see \cref{app1:eq5x})

\begin{equation}
  \Delta^{\text{\tiny (K)}}_0 =
  2\bigl[\ U_{2,1}\Re\bigl\{\Lambda_{2,1}^{\top}\Lambda_{2,1}^*\bigr\}
  U_{2,1}^{\dagger} \bigr]_{q,q}/\Vert X^{[q]}_{s,1}\Vert^2 + 5\omega_0^2c
  \label{app1:eq12}
\end{equation}

\end{document}